\pdfminorversion=4

\documentclass[11pt,oneside]{article}

\usepackage{palatino}
\usepackage[margin=1in]{geometry}
\usepackage{color,xcolor,soul}
\usepackage[abspage,user,savepos]{zref}
\usepackage[colorlinks, citecolor=blue]{hyperref}  
\usepackage{natbib}
\usepackage{fp}
\usepackage{framed}
\usepackage{cuted,balance}
\usepackage{setspace}

\usepackage{graphicx} 
\usepackage{epsfig} 
\usepackage{mathtools}
\usepackage{dsfont,amsmath} 
\usepackage{amssymb}  
\usepackage{algorithm}
\usepackage{algpseudocode}
\usepackage[capitalize]{cleveref}

\DeclareMathOperator*{\argmax}{argmax}
\newcommand{\E}{\mathrm{E}}


\newcommand{\BR}{\mathrm{br}}

\newcommand{\be}{\begin{equation}}
\newcommand{\ee}{\end{equation}}

\newcommand{\opi}{\overline{\pi}}

\newcommand{\oM}{\overline{M}}
\newcommand{\oS}{\overline{S}}
\newcommand{\os}{\overline{s}}
\newcommand{\wpi}{\widehat{\pi}}
\newcommand{\wPi}{\widehat{\Pi}}
\newcommand{\wU}{\widehat{U}}
\newcommand{\weps}{\widehat{\epsilon}}
\newcommand{\pq}{\pmb{q}}
\newcommand{\ppi}{\pmb{\pi}}


\newcommand{\uu}{\underline{u}}

\newcounter{lemma}
{\medskip}

{\medskip}

\newcounter{proposition}
\newenvironment{proposition}{\refstepcounter{proposition}\par\medskip
\noindent 
\textbf{Proposition \theproposition.} \em \rmfamily}
{\medskip}

\newcounter{theorem}
\newenvironment{theorem}{\refstepcounter{theorem}\par\medskip
\noindent 
\textbf{Theorem \thetheorem.} \em \rmfamily}
{\medskip}

\newcounter{corollary}
{\medskip}

\newcounter{definition}
\newenvironment{definition}[1]{\refstepcounter{definition}\par\medskip
\noindent 
\textbf{Definition \thedefinition~(#1)} \em \rmfamily}
{\medskip}

\newcounter{assumption}
\newenvironment{assumption}{\refstepcounter{assumption}\par\medskip
\noindent 
\textbf{Assumption \theassumption.} \em \rmfamily}
{\medskip}

\newcounter{remark}
\newenvironment{remark}{\refstepcounter{remark}\par\medskip
\noindent 
\textbf{Remark \theremark.} \em \rmfamily}
{\medskip}

\newcounter{example}

{\medskip}

\newenvironment{proof}{
   \indent \textit{Proof.} \rmfamily}{\hfill $\square$}
   
\linespread{1.2}

\begin{document}

\date{}

\title{\LARGE \bf
Finite-horizon Approximations and Episodic Equilibrium for Stochastic Games
}

\author{Muhammed O. Sayin
\thanks{M. O. Sayin is with Department of Electrical and Electronics Engineering, Bilkent University, Ankara T\"{u}rkiye.
        {\tt\small sayin@ee.bilkent.edu.tr}}%
}
\maketitle

\bigskip

\begin{center}
\textbf{Abstract}
\end{center}
This paper proposes a finite-horizon approximation scheme and introduces episodic equilibrium as a solution concept for stochastic games (SGs), where agents strategize based on the current state and episode stage. The paper also establishes an upper bound on the approximation error that decays with the episode length for both discounted and time-averaged utilities. This approach bridges the gap in the analysis of finite and infinite-horizon SGs, and provides a unifying framework to address time-averaged and discounted utilities. To show the effectiveness of the scheme, the paper presents episodic, decentralized (i.e., payoff-based), and model-free learning dynamics proven to reach (near) episodic equilibrium in broad classes of SGs, including zero-sum, identical-interest and specific general-sum SGs with switching controllers for both time-averaged and discounted utilities. 

\begin{spacing}{1.245}

\section{Introduction}
\label{Int}
Stochastic games (SGs), introduced by \citep{ref:Shapley53}, generalize Markov decision processes (MDPs) to \textit{non-cooperative} multi-agent environments and offer a powerful framework for modeling multi-agent interactions in reinforcement learning applications across robotics, automation, and control. Recent research has explored learning dynamics of \textit{self-interested} agents in SGs, particularly for zero-sum and identical-interest games \citep{ref:Leslie20,ref:Sayin22a,ref:Sayin21,ref:Sayin22b,ref:Baudin22}. These approaches often view SGs through the lens of ``stage games" played at each visited state, similar to the repeated play of games. However, the stage games in SGs are not necessarily stationary due to evolving agent strategies. This creates a significant hurdle in extending the classical repeated-game results (e.g., see the overview \citep{ref:Fudenberg09}) to SGs, especially for time-averaged SGs where stationary equilibrium might not even exist \citep{ref:Gillette57}.

This paper proposes \textit{finite-horizon approximations} to tackle this problem for both time-averaged and discounted utilities, and introduces the concept of \textit{episodic equilibrium}, where agents adapt their strategies based on the current state and the stage within fixed-length episodes. This approach captures periodic behavior relevant to real-world scenarios, similar to daily/weekly actions of humans or monthly/annual planning of businesses. The paper further quantifies the approximation level of the scheme with respect to the episode length for both discounted and time-averaged SGs (see Theorem \ref{lem:neari}). For example, the error bound decays geometrically with the episode length for the discounted cases.

To show the effectiveness of the approximation scheme, the paper presents episodic, decentralized (i.e., payoff-based), and model-free learning dynamics by generalizing the individual Q-learning dynamics, introduced by \citep{ref:Leslie05}, to SGs under finite-horizon approximations. The paper proves their almost-sure convergence to (near) episodic equilibrium in two-agent zero-sum SGs, identical-interest SGs, and a specific class of general-sum SGs with switching controllers provided that the rewards induce either zero-sum or potential games for both time-averaged and discounted cases (see Theorem \ref{thm:SG}). For example, one agent can have time-averaged utility while the other has discounted utility such that the underlying game is neither time-averaged nor discounted SG. 

The recent works \citep{ref:Leslie20,ref:Sayin22a,ref:Sayin21,ref:Sayin22b,ref:Baudin22} address the non-stationarity of the stage games through two-timescale learning dynamics where agents update their continuation payoff estimates (or its equivalent) at a slower timescale. Such timescale separation makes stage games quasi-stationary. Therefore, agents can track equilibrium in stage games whose payoffs are the continuation payoffs estimated through the classical learning dynamics. For example, fictitious play dynamics can reach equilibrium in zero-sum and identical-interest games \citep{ref:Fudenberg09}. Due to the tracking result, we can show the convergence of the continuation payoff estimates either based on the contraction property of the minimax-Bellman operator peculiar to zero-sum SGs or the quasi-monotonicity of the equilibrium values peculiar to identical-interest SGs. However, the convergence guarantees beyond these classes of games require the development of new technical tools specific to the structure of the underlying SG. 

The finite-horizon approximation mitigates the need for the development of new technical tools by bridging the gap between finite-horizon and infinite-horizon SGs so that we can address the infinite-horizon SG by addressing its finite-horizon version, for which the analysis can be relatively easier.
Since time-averaged or discounted utilities do not make much difference for finite-horizon SGs, the approximation scheme also provides a unifying framework to address time-averaged and discounted SGs. Indeed, the impact of future rewards decays geometrically in the discounted case. Therefore, agents can effectively focus on the initial rewards received for some time period (called \textit{effective horizon}) and ignore the rewards to be received afterwards \citep{ref:Sutton18}. However, this is not an effective approach for the time-averaged case where every reward has the same weight in their utilities. The finite-horizon approximation scheme can mitigate this issue with the periodic play of episodic strategies computed according to finite-horizon lengths. 

In a broader sense, the paper is also related to the literature on finite approximations for MDPs. Examples include the finite quantizations of continuum state spaces and/or continuum action spaces for both time-averaged and discounted MDPs (see the overview \citep{ref:SaldiBook}). Following the same trend, the paper focuses on finite-horizon approximations of infinite-horizon SGs for the non-cooperative multi-agent environments. 

The paper is organized as follows. Section \ref{sec:SGs} describes episodic equilibrium for SGs. Section \ref{sec:approximation} presents the finite-horizon approximation scheme. Sections \ref{sec:dynamics} and \ref{sec:convergence} present the learning dynamics and characterize their convergence properties, respectively. Section \ref{sec:conclusion} concludes the paper with some remarks. 

\section{Stochastic Games and Episodic Equilibrium}\label{sec:SGs}

We can characterize an $N$-agent SG by the tuple $\mathcal{S} = \langle S,(A^i,r^i)_{i=1}^N,p \rangle$, where $S$ is the \textit{finite} set of states, $A^{i}$ and $r^i(\cdot)$ are, resp., the \textit{finite} set of actions and the reward function for agent $i$, and $p(s_+\mid s,a)$ denotes the probability of transition from state $s\in S$ to $s_+\in S$ when agents take the joint action $a=(a^i)_{i=1}^N$ and $a\in A := \prod_{i=1}^NA^i$. At each stage $k=0,1,\ldots$, each agent $i$ simultaneously chooses an action $a_k^i\in A^i$ after observing the state $s_k$ and recalling the history of the game, i.e., previous states and actions $\{s_l,a_l\}_{l < k}$. Let agent $i$ randomize his/her actions independently. For example, agent $i$ can follow a behavioral strategy $\pi^i(\cdot)$ determining which action should be played with what probability contingent on the information $h_k := \{s_k,s_{k-1},a_{k-1},\ldots,s_0,a_0\}$. Denote the set of behavioral strategies by $\Pi^i$ for agent $i$. Then, given the strategy profile $\pi=(\pi^j)_{j=1}^N$ and the initial state $s\in S$, agent $i$'s utility function is given by
\be\label{eq:utility}
U^i(\pi;s) := \lim_{K\rightarrow\infty}\frac{1}{\sum_{k=0}^K (\gamma^i)^k}\E\left[\sum_{k=0}^{K} (\gamma^i)^k r^i(s_k,a_k)\mid s_0=s \right],
\ee
with the agent-specific discount factor $\gamma^i\in (0,1]$, where $(s_k,a_k)$ is the pair of the state and the joint action at stage $k$ and the expectation is taken with respect to the randomness induced by the transition kernel and the behavioral strategies. Due to the normalization, this utility function addresses both time-averaged $\gamma^i=1$ and discounted $\gamma^i\in (0,1)$ cases. For example, in the time-averaged case, i.e., when $\gamma^i=1$, we have
\be\label{eq:utilityTime}
U^i(\pi; s) := \lim_{K\rightarrow\infty}\frac{1}{K+1}\E\left[\sum_{k=0}^{K} (\gamma^i)^k r^i(s_k,a_k) \mid s_0 = s\right].
\ee
In the discounted case, i.e., when $\gamma^i\in (0,1)$, the limit and the expectation can interchange due to the monotone convergence theorem such that the utility function is given by
\be\label{eq:utilityDiscounted}
U^i(\pi; s) := (1-\gamma^i)\E\left[\sum_{k=0}^{\infty} (\gamma^i)^k r^i(s_k,a_k) \mid s_0 = s\right].
\ee

We say that a strategy profile $\pi_*=(\pi^i_*)_{i=1}^N$ is \textit{$\epsilon$-Nash equilibrium} for the SG $\mathcal{S}$ with initial state $s\in S$ provided that
\be\label{eq:NE}
U^i(\pi^i_*,\pi^{-i}_*; s) \geq U^i(\pi^i,\pi_*^{-i};s) - \epsilon^i
\ee
for all $\pi^i\in\Pi^i$ and $i=1,\ldots,N$, where $\epsilon = (\epsilon^i)_{i=1}^N$.

Agents might look for periodic behavior inherent to the underlying environment, similar to daily/weekly behaviors of humans or monthly/annual planning of businesses. For example, consider $M$-length episodes and call the stages within an episode by substages. Then, agents can have Markov strategies in which the probability of an action gets taken depends not only on the current state but also on the current substage within the episode, i.e., $\opi^i:\oS \rightarrow \Delta^i$, where $\oS:=S\times\oM$, and $\oM := \{0,1,\ldots,M-1\}$.\footnote{Let $\Delta^i$ denote the probability simplex over the finite set $A^i$} We call such $\opi^i$ by \textit{episodic strategy}. 

\begin{definition}[Episodic Equilibrium]
We say that strategy profile $\opi=(\opi^i)_{i=1}^n$ is $\epsilon$-episodic equilibrium for the SG $\mathcal{S}$ provided that $\opi$ satisfies \eqref{eq:NE}, i.e., $\opi$ is $\epsilon$-Nash equilibrium, and $\opi^i:\oS \rightarrow \Delta^i$ is episodic strategy for each $i$.
\end{definition}

For $M=1$, episodic strategies reduce to Markov stationary strategies and episodic equilibria reduce to stationary equilibria, introduced by \citep{ref:Shapley53}. The following proposition addresses the existence of episodic equilibria in discounted SGs.

\begin{proposition}
Episodic equilibrium exists for any finite-length episodes in every (finite) discounted SGs.
\end{proposition}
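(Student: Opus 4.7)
The plan is to reduce the existence of $\epsilon$-episodic equilibrium (in fact, exact episodic equilibrium) in $\mathcal{S}$ to the existence of a stationary Markov perfect equilibrium in a suitably augmented discounted SG, and then to verify that deviations to non-Markov behavioral strategies cannot improve the utility.

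First, I would construct an augmented SG $\bar{\mathcal{S}} = \langle \oS, (A^i,\bar{r}^i)_{i=1}^N, \bar{p}\rangle$ with state space $\oS := S \times \oM$, where $\oM = \{0,1,\ldots,M-1\}$. The $\oM$-component evolves deterministically and cyclically: from state $(s,m)$ under joint action $a$, the next state is $(s',(m+1)\bmod M)$ with probability $p(s'\mid s,a)$. The rewards are pulled back via $\bar{r}^i((s,m),a) := r^i(s,a)$, and each agent keeps the same discount factor $\gamma^i \in (0,1)$. A stationary Markov strategy in $\bar{\mathcal{S}}$ is exactly a map $\opi^i:\oS \to \Delta^i$, i.e., an episodic strategy in $\mathcal{S}$. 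By a direct unrolling of the definitions, if $s_0 = s$ and we initialize the augmented state at $(s,0)$, the trajectory $(s_k, k \bmod M)_{k\geq 0}$ under any episodic profile $\opi$ has the same law in both games, so the discounted utilities in \eqref{eq:utilityDiscounted} coincide: $U^i(\opi; s) = \bar{U}^i(\opi; (s,0))$.

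Second, since $\bar{\mathcal{S}}$ is a finite $N$-agent discounted SG (with agent-specific discount factors in $(0,1)$), the classical existence result for stationary Markov perfect equilibrium in discounted stochastic games (Fink, 1964; Takahashi, 1964), proved via a Kakutani fixed-point argument applied to the best-response correspondence on the compact convex set of stationary strategy profiles, yields a stationary Nash equilibrium $\opi_* = (\opi^i_*)_{i=1}^N$ of $\bar{\mathcal{S}}$. By construction, this profile is an episodic strategy profile in $\mathcal{S}$. The equilibrium property in $\bar{\mathcal{S}}$ gives the inequality \eqref{eq:NE} (with $\epsilon = 0$) for all deviations $\pi^i$ that are themselves stationary Markov strategies on $\oS$ — equivalently, episodic strategies in $\mathcal{S}$.

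The main obstacle, and the last step, is to strengthen this from deviations within the class of episodic strategies to deviations within the full class $\Pi^i$ of behavioral (history-dependent) strategies required by Definition of $\epsilon$-Nash equilibrium in \eqref{eq:NE}. For this I would fix any opponent profile $\opi_*^{-i}$ and observe that, because each opponent conditions only on the augmented state $(s_k, k\bmod M)$, agent $i$'s single-agent decision problem in $\mathcal{S}$ reduces to a finite discounted MDP with state space $\oS$, one-step reward $\bar{r}^i$ marginalized over $\opi_*^{-i}$, and transition kernel induced by $\bar{p}$ and $\opi_*^{-i}$. By the standard MDP result that stationary Markov policies are optimal among all history-dependent policies in finite discounted MDPs (a consequence of the Bellman optimality equation and the contraction of the Bellman operator), agent $i$'s best response can be taken to be stationary on $\oS$, i.e., episodic on $\mathcal{S}$. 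Hence $\sup_{\pi^i \in \Pi^i} U^i(\pi^i,\opi_*^{-i};s) = \sup_{\opi^i} U^i(\opi^i,\opi_*^{-i};s) = U^i(\opi_*^i,\opi_*^{-i};s)$ for every $s\in S$, which is precisely \eqref{eq:NE} with $\epsilon = 0$, completing the existence proof.
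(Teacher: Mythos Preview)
Your proposal is correct and follows essentially the same approach as the paper: both arguments construct the augmented SG on $\oS = S \times \oM$ so that episodic strategies become stationary Markov strategies, and then invoke Fink's existence theorem for discounted SGs. Your version is more explicit than the paper's in spelling out why the resulting stationary equilibrium remains a Nash equilibrium against arbitrary history-dependent deviations (via the single-agent MDP reduction), a point the paper leaves implicit in its citation of \citep{ref:Fink64}.
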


\begin{proof}
The proof follows from the observation that if we extend the underlying state space $S$ to $\oS = S\times \oM$ by incorporating the substage index $m\in \oM$ to the underlying state $s\in S$, then episodic strategies would be contingent only on the current extended state $(s,m)\in \oS$ while the reward and transition kernels would still depend only on the current state and actions. Therefore, the episodic strategies in the underlying SG would be Markov stationary strategies in the SG over the extended state space and Markov stationary equilibrium always exists in discounted SGs \citep{ref:Fink64}.
\end{proof}

\section{Finite-horizon Approximation}\label{sec:approximation}

This section presents the finite-horizon approximation scheme establishing a bridge in the analyses of finite-horizon and infinite-horizon SGs. Given an SG $\mathcal{S} =  \langle S,(A^i,r^i)_{i=1}^N,p \rangle$, we can characterize its finite-horizon version by the tuple $\widehat{\mathcal{S}} = \langle M, S,(A^i,r^i)_{i=1}^N,p \rangle$, where $M$ is the finite horizon length. The finite state and action sets, reward and probability transition functions are as described in Section \ref{sec:SGs}. Let $\wpi^i(\cdot)$ denote agent $i$'s behavioral strategy contingent on the history of the play. Correspondingly, $\wPi^i$ denotes the set of behavioral strategies for agent $i$. Similar to \eqref{eq:utility}, given the strategy profile $\wpi = (\wpi^i\in \wPi^i)_{i=1}^N$ and the initial state $s\in S$, agent $i$'s utility is given by
\be\label{eq:utilityfinite}
\wU^i(\wpi;s) := \E\left[\sum_{k=0}^{M-1} (\gamma^i)^k r^i(s_k,a_k) \mid s_0 = s \right],
\ee
where the expectation is taken with respect to the randomness on the pair $(s_k,a_k)$ induced by $\wpi$ and $p(\cdot |\cdot)$.
We do not have normalization in the utility $\wU^i$, different from \eqref{eq:utility}, since the finite sum in \eqref{eq:utilityfinite} is bounded.

Episodic strategies for infinite-horizon SGs reduce to Markov strategies in finite-horizon SGs. Therefore, agents can \textit{learn/seek/compute} episodic equilibrium under finite-horizon approximation scheme by viewing the underlying infinite-horizon SG $\mathcal{S}=\langle S,(A^i,r^i)_{i=1}^2,p\rangle$ as the \textit{repeated play} of its finite-horizon version $\widehat{\mathcal{S}}=\langle M,S,(A^i,r^i)_{i=1}^2,p\rangle$ by partitioning the infinite horizon into $M$-length episodes. Across the repeated play of these finite-horizon versions, their initial states get determined based on the trajectory of the states within the underlying infinite-horizon SG.

The following theorem shows that Markov strategies attaining $\weps$-Nash equilibrium in $\widehat{\mathcal{S}} = \langle M, S,(A^i,r^i)_{i=1}^N,p \rangle$ can attain near episodic equilibrium in its infinite-horizon version $\mathcal{S}$. 

\begin{theorem}\label{lem:neari}
Consider an SG characterized by $\mathcal{S} = \langle S,(A^i,r^i)_{i=1}^n,p\rangle$. Let $\opi=(\opi^i)_{i=1}^N$ be an episodic strategy profile satisfying 
\be\label{eq:nearfinite}
\wU^i(\opi^i,\opi^{-i};s) \geq \wU^i(\wpi^i,\opi^{-i};s) - \weps^i
\ee
for all $\wpi^i\in\wPi^i$ and $i=1,\ldots,N$, for some error $\weps^i\geq 0$ and all $s\in S$ in the finite-horizon version $\widehat{\mathcal{S}} = \langle M,S,(A^i,r^i)_{i=1}^n,p\rangle$. Then, $\opi$ is $\epsilon$-episodic equilibrium for $\mathcal{S}$ with any initial state $s$ such that
\be
U^i(\opi^i,\opi^{-i};s) \geq U^i(\pi^i,\opi^{-i};s) - \epsilon^i
\ee
for all $\pi^i \in \Pi^i$ and $i=1,\ldots,N$.
For each $i$, the approximation error $\epsilon^i\geq 0$ is given by
\be\label{eq:bound}
\epsilon^i = \left\{\begin{array}{ll} 
(\delta^i+\weps^i)\cdot \cfrac{1}{M} & \mbox{if }\gamma^i=1\\ 
((\gamma^i)^M \cdot \delta^i + \weps^i)\cdot\cfrac{1-\gamma^i}{1-(\gamma^i)^M}& \mbox{if } \gamma^i\in [0,1)
\end{array}\right.,
\ee 
where
\begin{align}
&\delta^i := \max_{s\in S}\left\{\wU^{i}(\opi^i,\opi^{-i};s)\right\} - \min_{s\in S}\left\{\wU^{i}(\opi^i,\opi^{-i};s)\right\}.\label{eq:deltai}
\end{align}
\end{theorem}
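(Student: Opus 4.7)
The plan is to decompose the infinite-horizon utility $U^i$ into contributions of consecutive length-$M$ episodes and to exploit that, since $\opi^{-i}$ is episodic Markov, each such episode is---conditionally on its initial state---a fresh instance of the finite-horizon game $\widehat{\mathcal{S}}$. Setting $\hat V^i(s) := \wU^i(\opi^i,\opi^{-i};s)$ and $\hat V^i_\star(s) := \max_{\wpi^i\in\wPi^i}\wU^i(\wpi^i,\opi^{-i};s)$, the hypothesis \eqref{eq:nearfinite} becomes $\hat V^i_\star(s) \leq \hat V^i(s) + \weps^i$ for all $s\in S$, while $\delta^i$ is by definition the spread of $\hat V^i$ over $S$.

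First I would regroup stages by episode via $\sum_{k=0}^\infty(\gamma^i)^k r^i(s_k,a_k) = \sum_{n=0}^\infty(\gamma^i)^{nM}\sum_{j=0}^{M-1}(\gamma^i)^j r^i(s_{nM+j},a_{nM+j})$, so that conditioning on $s_{nM}$ and using the episodic Markov property of $\opi^{-i}$ identifies the conditional expected inner sum with a finite-horizon utility for whatever effective strategy agent $i$ implements within episode $n$; that conditional expectation equals $\hat V^i(s_{nM})$ under $\opi^i$ and is bounded above by $\hat V^i_\star(s_{nM})$ under any deviation $\pi^i\in\Pi^i$. Taking expectations through, this yields $U^i(\opi;s) = (1-\gamma^i)\sum_{n}(\gamma^i)^{nM}\E[\hat V^i(s_{nM})]$ in the discounted case and, along the subsequence $K+1=NM$, $U^i(\opi;s) = \lim_N \frac{1}{NM}\sum_{n=0}^{N-1}\E[\hat V^i(s_{nM})]$ in the time-averaged case, with analogous upper expressions for the deviation obtained by replacing $\hat V^i$ with $\hat V^i_\star$.

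Then I would bound $U^i(\pi^i,\opi^{-i};s) - U^i(\opi;s)$ term-by-term. In the discounted case, the $n=0$ contribution is controlled by $(1-\gamma^i)\weps^i$ since $s_0 = s$ is shared by both strategy profiles, while each $n\geq 1$ term is bounded by $(1-\gamma^i)(\gamma^i)^{nM}(\delta^i+\weps^i)$ using $\hat V^i_\star \leq \max_{s'}\hat V^i(s') + \weps^i$ on top and $\hat V^i \geq \min_{s'}\hat V^i(s')$ on bottom; summing the geometric tail $\sum_{n\geq 1}(\gamma^i)^{nM} = (\gamma^i)^M/(1-(\gamma^i)^M)$ and simplifying yields exactly $\epsilon^i = ((\gamma^i)^M\delta^i + \weps^i)(1-\gamma^i)/(1-(\gamma^i)^M)$. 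In the time-averaged case the $1/N$ averaging absorbs the first-episode contribution, so a uniform per-episode bound of $\delta^i+\weps^i$ divided by $M$ directly delivers the claimed $(\delta^i+\weps^i)/M$. The main technical care is in the discounted case: a uniform per-episode bound would give $\delta^i+\weps^i$ in place of $(\gamma^i)^M\delta^i+\weps^i$ and so lose the geometric decay in $M$; the sharpening requires isolating the $n=0$ episode to exploit $s_0 = s$ before applying the geometric-series algebra.
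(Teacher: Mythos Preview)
Your proposal is correct and follows essentially the same approach as the paper: decompose the (truncated) infinite-horizon utility into $M$-length episodes, use that the $\opi^{-i}$ is episodic to identify each episode with an instance of $\widehat{\mathcal{S}}$, bound the first episode by $\weps^i$ via the shared initial state and each later episode by $\delta^i+\weps^i$ via the uniform max--min bounds on $\hat V^i$, and then sum and normalize. The paper carries out the same calculation with finite $L$ episodes before passing to the limit, arriving at the identical bound $\delta^i\sum_{\ell\ge 1}(\gamma^i)^{\ell M}+\weps^i\sum_{\ell\ge 0}(\gamma^i)^{\ell M}$ prior to normalization.
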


\begin{proof}
The proof follows from induction. To this end, we first introduce
\be
U_{\underline{m}:\overline{m}}^{i}(\opi;s) := \E\left[\sum_{k=\underline{m}}^{\overline{m}}(\gamma^i)^{k-\underline{m}}r^i(s_k,a_k)\mid s_{\underline{m}} = s\right]\label{eq:uu}
\ee
for notational simplicity. Note that we have $U^i_{\ell M:(\ell+1)M-1}\equiv \wU^i$ for any $\ell =0,1,\ldots$ by \eqref{eq:utilityfinite}. Partition the time interval from $0$ to $LM-1$ for some $L=1,2,\ldots$ into $L$-many $M$-length intervals, e.g., from $\ell M$ to $(\ell+1)M-1$ for each $\ell = 0,1,\ldots$. Then we can decompose the expected value of the discounted sum of the rewards for any given strategy profile $(\pi^i,\opi^{-i})$ as
\begin{align}
&U_{0:LM-1}^i(\pi^i,\opi^{-i};s_0) = \sum_{\ell=0}^{L-1}(\gamma^i)^{\ell M}\E\left[\wU^i(\pi^i,\opi^{-i};s_{\ell M})\right],\label{eq:decomposition}
\end{align}
where the expectations on the right-hand side are taken with respect to the randomness on the state $s_{\ell M}$. Note that $\opi^{-i}$ are $M$-episodic strategies though $\pi^i$ can be any strategy from $\Pi$. 

Let $\opi_*^i$ be the best response to $\opi^{-i}$ in $\wU_{0:M-1}^i(\cdot; s_0)$ for any initial state $s_0$. Then, the decomposition \eqref{eq:decomposition} yields that
\begin{align}
U_{0:LM-1}^i(\pi^i,\opi^{-i};s_0) &\leq \sum_{\ell=0}^{L-1}(\gamma^i)^{\ell M} \E\left[\wU^i(\opi_*^i,\opi^{-i};s_{\ell M})\right]\\
&\leq \wU^i(\opi_*^i,\opi^{-i};s_0) + \max_{s\in S} \left\{\wU^i(\opi^i_*,\opi^{-i};s)\right\}\times \sum_{\ell=1}^{L-1}(\gamma^i)^{\ell M}
\end{align}
for any $\pi^i\in \Pi^i$. Then, by \eqref{eq:nearfinite}, we obtain \begin{align}
U_{0:LM-1}^i(\pi^i,\opi^{-i};s_0) \leq &\;\wU^i(\opi^i,\opi^{-i};s_0) + \max_{s\in S} \left\{\wU^i(\opi^i,\opi^{-i};s)\right\}\times \sum_{\ell=1}^{L-1}(\gamma^i)^{\ell M} + \weps^i\sum_{\ell=0}^{L-1}(\gamma^i)^{M\ell}\label{eq:upp}
\end{align}
for any $\pi^i\in \Pi^i$.
On the other hand, we have
\begin{align}
\max_{\pi^i\in \Pi^i} \left\{U_{0:LM-1}^i(\pi^i,\opi^{-i};s_0)\right\}&\geq U_{0:LM-1}^i(\opi^i,\opi^{-i};s_0) \\
&\geq \wU^i(\opi^i,\opi^{-i};s_0) + \min_{s\in S}\left\{\wU^i(\opi^i,\opi^{-i};s)\right\}\times \sum_{\ell=1}^{L-1}(\gamma^i)^{\ell M},\label{eq:low}
\end{align}
where the second inequality follows from \eqref{eq:decomposition}.
The upper and lower bounds, resp., \eqref{eq:upp} and \eqref{eq:low}, yield that
\begin{align}
&\max_{\pi^i\in \Pi^i} \left\{U_{0:LM-1}^i(\pi^i,\opi^{-i};s_0)\right\} - U_{0:LM-1}^i(\opi^i,\opi^{-i};s_0)\leq \delta^i \sum_{\ell=1}^{L-1}(\gamma^i)^{\ell M} + \weps^i\sum_{\ell=0}^{L-1}(\gamma^i)^{\ell M},\label{eq:mainbound}
\end{align}
where $\delta^i$ is as described in \eqref{eq:deltai}. Therefore, by \eqref{eq:utility} and \eqref{eq:uu}, we obtain the upper bounds \eqref{eq:bound} for both time-averaged and discounted utilities.
\end{proof} 

\begin{remark}
The term $\delta^i$ can be inherently small depending on the underlying game. For example, in zero-sum skew-symmetric games, i.e., $r^i(a^i,a^j) =-r^i(a^j,a^i)$ and $r^i(a^i,a^j) =-r^j(a^j,a^i)$ for each $i\neq j$, the equilibrium value is zero. Hence, $\delta^i=0$ for any $M$ if the rewards have the zero-sum skew-symmetric structure. 
\end{remark}

\begin{remark}
The term $\delta^i\rightarrow 0$ as $M\rightarrow \infty$ for irreducible SGs in the time-averaged cases since the value of the game does not depend on the initial state. In the discounted cases, the impact of $\delta^i$ on the upper bound decays to zero geometrically as $M\rightarrow\infty$ since $\delta^i \leq 2\max_{(s,a)}\{|r^i(s,a)|\}/(1-\gamma^i)$ for any $M$.
\end{remark}

The following section presents new dynamics for learning episodic equilibrium in SGs based on the finite-horizon approximation scheme to illustrate/exemplify its effectiveness.

\section{Episodic Individual Q-learning}\label{sec:dynamics}

This section generalizes the individual Q-learning dynamics \citep{ref:Leslie05} to finite-horizon SGs. In the episodic individual Q-learning dynamics, each agent $i$ assumes that the opponent $j\neq i$ plays according to some Markov strategy $\wpi^{j}:\oS\rightarrow\Delta^j$ across the repeated play of the finite-horizon SG $\widehat{\mathcal{S}}$ (i.e., the opponent plays according to some $M$-episodic strategy in the infinite-horizon SG $\mathcal{S}$), similar to the stationary-opponent-play assumptions adopted in the widely studied fictitious play dynamics and its variants, e.g., see \citep{ref:Fudenberg09}. This assumption yields that agent $i$ faces a finite-horizon MDP. Given $\widehat{\pi}^j$, agent $i$'s local $Q$-function $q_{\widehat{\pi}^j}^i:\oS\times A^i\rightarrow\mathbb{R}$ is given by\footnote{Henceforth, given any functional $f:X\rightarrow \mathbb{R}$, we let $f(\mu) := \E_{x\sim \mu}[f(x)]$ for any probability distribution $\mu$ over $X$ for the ease of exposition.} 
\begin{align}
&q^i_{\widehat{\pi}^j}(\os,a^i) := r^i(s,a^i,\wpi^{j}) + \gamma^i \sum_{s_+\in S}p(s_+\mid s,a^i,\wpi^j) \cdot v^i_{\widehat{\pi}^j}(\os_+)\label{eq:q}
\end{align}
for all $\os=(s,m)$ and $a^i$, where $\os_+=(s_+,m+1)$, and the value function $v^i_{\widehat{\pi}^j}:\oS\rightarrow\mathbb{R}$ is given by
\be\label{eq:v}
v^i_{\widehat{\pi}^j}(\os) = \mathbb{I}_{\{m<M\}}\cdot \max_{a^i\in A^i} \{q^i_{\widehat{\pi}^j}(\os,a^i)\}\quad\forall \os=(s,m),
\ee
where $\mathbb{I}_{\{\cdot\}}\in \{0,1\}$ is the indicator function. 

Agent $i$ does not know the opponent's strategy $\widehat{\pi}^j$ to compute $q_{\widehat{\pi}^j}^i(\cdot)$ by \eqref{eq:q}. Let $q_{k}^i:\oS\times A^i\rightarrow\mathbb{R}$ and $v_{k}^i:\oS\rightarrow\mathbb{R}$ denote, resp., agent $i$'s local Q-function and value function estimates at stage $k=0,1,\ldots$ based on the observations made. Given $q^i_k(\cdot)$, agent $i$ can take action $a_k^i\sim \BR^i_k:=\BR^i(q_k^i(\os_k,\cdot))$ at the current extended state $\os_k\in \oS$ according to the smoothed best response $\BR^i:\mathbb{R}^{A^i}\rightarrow \Delta^i$ defined by
\begin{equation}\label{eq:smoothed}
\BR^i(q^i) := \argmax_{\mu^i\in\Delta^i}\left\{ q^i(\mu^i) + \tau H(\mu^i)\right\},
\end{equation} 
where $H(\mu^i) := \E_{a^i\sim \mu^i}[-\log(\mu^i(a^i))]$ for any $\mu^i\in\Delta^i$ is the entropy regularization and $\tau>0$ controls the level of exploration in the response \citep{ref:Hofbauer05}. There exists unique smoothed best response due to the strong concavity of the entropy regularization. 

By \eqref{eq:v} and \eqref{eq:smoothed}, the value of the current $\os_k$ is given by
\be\label{eq:vupdate}
v^{i}_k(\os_k) = \mathbb{I}_{\{m_k<M\}} \cdot q_k^{i}(\os_k,\BR^{i}_k)
\ee
and $v_{k}^i(\os) = v_{k-1}^i(\os)$ for all $\os\neq \os_k$. Based on \eqref{eq:q} and $v_k^i(\cdot)$, agent $i$ receives the payoff 
\be\label{eq:sample}
\widehat{q}_k^i := r_k^i + \gamma^i\cdot v_k^i(\os_{k+1})\in\mathbb{R}
\ee
in the current stage game, where $r_k^i=r^i(s_k,a_k^i,a_k^j)\in \mathbb{R}$ is the reward received and $\os_{k+1}\in \oS$ is the next extended state. Agent $i$ approximates the expected continuation payoff by
\be\label{eq:sample}
v_k^i(\os_{k+1})\approx \sum_{s_+\in S}p(s_+\mid s_k,a^i_k,a^j_k) \cdot v^i_k(s_+,m_k+1),
\ee
as in the classical Q-learning algorithm \citep{ref:Sutton18}. The step sizes decaying to zero at a certain rate can address the impact of such stochastic approximation errors, e.g., see \citep{ref:Borkar08}.

Based on \eqref{eq:sample}, the agent can update $q_k^i(\os,a^i)$ for all $(\os,a^i)$ according to
\begin{align}
q_{k+1}^i(\os,a^i) = q_k^i(\os,a^i) + \overline{\alpha}_k(\os,a^i) \cdot \Big(\widehat{q}_k^i - q_k^i(\os,a^i)\Big).\label{eq:qupdate}
\end{align} 
The step size $\overline{\alpha}_k(\cdot)\in [0,1]$ is defined by 
\be\label{eq:stepsize}
\overline{\alpha}_k(\os,a^i) := \mathbb{I}_{\{(\os,a^i)=(\os,a_k^i)\}}\cdot \min\left\{1,\frac{\alpha_{c_k(\os_k)}}{\BR_k^i(a_k^i)}\right\}\quad\forall (\os,a^i),
\ee
where $\alpha_c\in [0,1]$ for $c=0,1,\ldots$ is some reference step size decaying to zero, i.e., $\alpha_c\rightarrow 0$ as $c\rightarrow\infty$, and $c_k(\os)\in \mathbb{N}$ denotes the number of times the extended state $\os\in\oS$ gets realized until stage $k$. The normalization in \eqref{eq:stepsize} ensures that the local Q-function estimates get updated synchronously for every action in the expectation conditioned on the history of the play \citep{ref:Leslie05}. Furthermore, the threshold from above by $1$, i.e., $\min\{1,\cdot\}$, ensures that the iterates remain bounded.

\begin{algorithm}[tb]
\caption{Episodic Individual Q-learning}\label{tab:algo}
\begin{algorithmic}[1]
\State {\bfseries initialize:} $q^{i}_0$, $v_0^i$ arbitrarily
\For{each stage $k=0,1,\ldots$}  
\State observe $s_k$ and construct $\os_k=(s_k,m_k)$
\If{$k>0$}
\State set $\widehat{q}_{k-1}^i = r_{k-1}^i + \gamma^i\cdot v_{k-1}^i(\os_{k})$
\State update $q_{k}^{i}(\cdot) = q_{k-1}^{i}(\cdot) + \overline{\alpha}_{k-1}(\cdot)\cdot(\widehat{q}_{k-1}^i - q_{k-1}^{i}(\cdot))$
\EndIf
\State play $a_k^{i}\sim \BR_k^{i}= \BR^i(q_k^i(\os_k,\cdot))$ and receive $r_k^i=r^i(s_k,a_k)$
\State update 
$v^{i}_k(\os_k) = \mathbb{I}_{\{m_k<M\}} \cdot q_k^{i}(\os_k,\BR^{i}_k)$ 
\State set $v_{k}^i(\os) = v_{k-1}^i(\os)$ for all $\os\neq \os_k$
\EndFor
\end{algorithmic}
\end{algorithm}

Algorithm \ref{tab:algo} provides an explicit description of the learning dynamic presented for agent $i$. Note that agent $i$ updates $q_{k}^{i}(\cdot)$ to $q_{k+1}^i(\cdot)$ at the beginning of stage $k+1$ after observing the next extended state $\os_{k+1}$ due to the one-stage look-ahead sampling. Algorithm \ref{tab:algo} also has a similar flavor with the decentralized Q-learning (Dec-Q) presented in \citep{ref:Sayin21} since both dynamics generalize the individual Q-learning to SGs. However, Algorithm \ref{tab:algo} differs from  the Dec-Q in the following ways: $(i)$ Algorithm \ref{tab:algo} uses a single step size, whereas the Dec-Q uses two step sizes decaying at different timescales. The single step size is relatively easier to implement for practical applications. 
$(ii)$ Algorithm \ref{tab:algo} has convergence guarantees beyond two-agent zero-sum SGs, as shown in Section \ref{sec:convergence}.  
$(iii)$ Algorithm \ref{tab:algo} is applicable for both discounted and time-averaged cases. 

\section{Convergence Results}\label{sec:convergence}

This section characterizes the convergence properties of Algorithm \ref{tab:algo} based on the following observation. Given the value function estimates $v_k^i(\cdot)$, the agents play an underlying stage game specific to the current extended state $\os_k$. The payoff functions of these stage games are given by the \textit{global Q-function} defined by\footnote{Agents could have tracked the global Q-function instead of the local one if they had access to opponent actions.}
\be\label{eq:Qk}
Q_k^i(\os_k, \cdot) := r^i(s_k, \cdot) + \gamma^i \sum_{s_+\in S} p(s_+\mid s_k,\cdot)\cdot v_k^i(s_+,m_k+1).
\ee
However, these payoff functions are not necessarily stationary due to their dependence on the value function estimates of the next extended states except the ones associated with the last substage, i.e., $m_k=M-1$, where we have $v_k^i(s_+,M-1) = 0$, and therefore,
\be\label{eq:lastQ}
Q_k^i(\os, a^i,a^j) := r^i(s, a^i,a^j)\quad\forall \os\in S\times \{M-1\}\mbox{ and }k.
\ee

Intuitively, the convergence of the dynamics can follow from backward induction since the individual Q-learning dynamics are known to converge equilibrium in two-agent zero-sum and identical-interest games played repeatedly \citep{ref:Leslie05}. However, there are two challenges: $(i)$ addressing the impacts of time-varying value function estimates and the sampling error \eqref{eq:sample} in the local Q-function estimates that get updated based on the local information only, and $(ii)$ ensuring that stage games have zero-sum or identical-interest structures. To address these challenges, the paper extends the convergence results for individual Q-learning dynamics to a sequence of two-agent games that become strategically equivalent to zero-sum or potential games in the limit. 

We say that a two-agent game $\langle A^i,A^j,u^i,u^j\rangle$ is a \textit{zero-sum game} if $u^i(a^i,a^j) + u^j(a^j,a^i) = 0$ for all $(a^i,a^j)\in A^i\times A^j$, and a \textit{potential game} if there exists some potential function $\Phi:\prod_{i=1}^2 A^i\rightarrow \mathbb{R}$ such that
\be\label{eq:potential}
u^i(a^i,a^j) - u^i(\tilde{a}^i,a^j) = \Phi(a^i,a^j) - \Phi(\tilde{a}^i,a^j)\quad\forall (\tilde{a}^i,a^i,a^j)
\ee
for each $i$. Furthermore, we say that two $n$-agent strategic-form games characterized by the tuples $\langle A^i,u^i \rangle_{i=1}^n$ and $\langle A^i,\underline{u}^i\rangle_{i=1}^n$ are \textit{strategically equivalent} with respect to the best response provided that there exists a positive constant $h^i>0$ and a function $g^i:A^{-i}\rightarrow\mathbb{R}$ such that \citep[Chapter 3]{ref:Basar99}
\be\label{eq:SE}
u^i(a^i,a^{-i}) = h^i \cdot \underline{u}^i(a^i,a^{-i}) + g^i(a^{-i})\quad\forall (a^i,a^{-i}).
\ee 

For explicit results, we consider a sequence of games 
\be
(\mathcal{G}_{k}= \langle \Omega,\mathbb{P}, (A^i,u_{k}^i)_{i=1}^N \rangle)_{k=0}^\infty,\label{eq:secGames}
\ee
where $\mathcal{G}_{k} $ involves Nature as a non-strategic player taking actions independently according to the probability distribution $\mathbb{P}$ over the action space $\Omega$. The payoffs $(u_k^i:\Omega\times A^i\times A^j\rightarrow\mathbb{R})_{k=0}^{\infty}$ form an exogenous sequence converging $u_k^i(\cdot)\rightarrow u^i(\cdot)$ pointwise as $k\rightarrow\infty$. Nature models the sampling error \eqref{eq:sample} and the convergent payoffs model the (exogenous) convergence of the value function estimates for the next stages. 

Algorithm \ref{tab:algo} reduces to the individual Q-learning for $|S|=1$ and $M=1$. Let us omit the dependence on the singleton state $s$ and sub-stage $m=0$ for notationally simplicity. Then, at each stage $k$, for the underlying game $\mathcal{G}_k$, agent $i$ follows the individual Q-learning dynamics given by
\begin{align}
&q_{k+1}^i(a^i) = q_k^i(a^i) + \overline{\alpha}_k(a^i)\left(u_k^i(\omega_k, a_k^i,a_k^j) - q_k^i(a^i)\right)\label{eq:indQ}
\end{align}
for all $a^i\in A^i$, where $\omega_k\sim \mathbb{P}$ is Nature's action. Note that we have $c_k(\os_k)=k$ in \eqref{eq:stepsize}.

We make the following assumption about $\{\alpha_k\}$.

\begin{assumption}\label{assume:stepsize}
The step sizes $(\alpha_k\geq 0)_{k=0}^{\infty}$ decay to zero at a certain rate such that 
$\sum_{k=0}^{\infty} \alpha_k = \infty$ and $\sum_{k=0}^{\infty} \alpha_k^2 < \infty$. 
\end{assumption}

The following theorem addresses the convergence of the individual Q-learning dynamics for the sequence of games that become strategically equivalent to zero-sum or potential games in the limit.

\begin{theorem}\label{thm:IndQ}
Consider the sequence of games, as described in \eqref{eq:secGames}.
Suppose that each agent $i$ follows the individual Q-learning dynamics \eqref{eq:indQ} for $\mathcal{G}_k$ at stage $k$ and Assumption \ref{assume:stepsize} holds. If the game $\mathcal{G} = \langle \Omega,\mathbb{P}, (A^i,u^i)_{i=1}^2\rangle$ is strategically equivalent to zero-sum or potential games, then we have $(q_k^1,q_k^2)\rightarrow (\overline{q}^1,\overline{q}^2)$ as $k\rightarrow\infty$ almost surely for some $\overline{q}^i:A^i\rightarrow\mathbb{R}$ satisfying 
\be
\overline{q}^i(a^i) =  u^i(a^i,\BR^j(\overline{q}^j))\quad\forall a^i \mbox{ and }i. 
\ee
\end{theorem}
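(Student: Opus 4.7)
The plan is to cast the dynamics \eqref{eq:indQ} as a stochastic approximation recursion and invoke the ODE method. First I would rewrite the recursion, using the step-size normalization \eqref{eq:stepsize}, in the canonical form
\begin{align*}
q_{k+1}^i(a^i) = q_k^i(a^i) + \alpha_k \Big( u^i(a^i,\BR^j(q_k^j)) - q_k^i(a^i) + w_k^i(a^i) + \xi_k^i(a^i) \Big),
\end{align*}
where the martingale-difference term $w_k^i$ absorbs the randomness in $a_k^i$, $a_k^j$, and $\omega_k$ together with the asynchronous-sampling bias from the $\min\{1,\alpha_k/\BR_k^i(a_k^i)\}$ truncation, while the deterministic drift $\xi_k^i(a^i) = u_k^i(a^i,\BR^j(q_k^j)) - u^i(a^i,\BR^j(q_k^j))$ captures the non-stationarity of the payoff sequence. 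The key property of \eqref{eq:stepsize} is that, conditioned on the history, the per-coordinate expected increment is the common scalar $\alpha_k$ multiplied by $u_k^i(a^i,\BR_k^j) - q_k^i(a^i)$ for every $a^i$; this holds eventually almost surely because $\BR^i$ takes values in the $\tau$-relative interior of $\Delta^i$, so $\alpha_k/\BR_k^i(a_k^i)\le 1$ once $\alpha_k$ is small enough.

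Next I would verify the standard hypotheses of the asynchronous stochastic approximation theorem (see, e.g., \citep{ref:Borkar08}). Boundedness of the iterates is immediate: the $\min\{1,\cdot\}$ truncation keeps $q_{k+1}^i(a^i)$ a convex combination of $q_k^i(a^i)$ and the bounded sample $u_k^i(\omega_k,a_k^i,a_k^j)$, so $\|q_k^i\|_\infty\le R:=\sup_k\max_{\omega,a}|u_k^i(\omega,a)|$. Assumption \ref{assume:stepsize} gives the standard step-size conditions. The noise $w_k^i$ is a square-integrable martingale difference whose conditional variance is controlled by the normalization in \eqref{eq:stepsize} — this is exactly the trick used by \citep{ref:Leslie05} to handle the unbounded $1/\BR_k^i(a_k^i)$ factor. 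Pointwise convergence $u_k^i\to u^i$, together with continuity of $\BR^j$ and uniform boundedness of $q_k^j$, implies $\xi_k^i(a^i)\to 0$, so it enters as a vanishing perturbation.

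With these ingredients, the standard ODE argument shows that the interpolated iterates track the mean-field flow of the \emph{limit} game $\mathcal{G}$, namely
\begin{align*}
\dot{q}^i(a^i) = u^i(a^i,\BR^j(q^j)) - q^i(a^i), \quad i=1,2.
\end{align*}
Because \eqref{eq:SE} leaves each $\BR^j$ invariant up to the affine reparametrization $q^j\mapsto h^j q^j + g^j$, this flow is conjugate to the corresponding smooth best-response flow of a zero-sum or potential game. For both of those classes, \citep{ref:Hofbauer05} exhibits a strict Lyapunov function (the entropy-regularized duality gap in the zero-sum case, the regularized potential in the potential case), giving a unique globally asymptotically stable rest point characterized by the fixed-point equation $\bar{q}^i(a^i)=u^i(a^i,\BR^j(\bar{q}^j))$. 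Almost-sure convergence $(q_k^1,q_k^2)\to(\bar{q}^1,\bar{q}^2)$ then follows from the ODE tracking theorem.

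The main obstacle I expect is handling the two perturbations \emph{jointly}: the non-stationary payoff drift $\xi_k^i$ and the asynchronous martingale noise $w_k^i$ whose raw variance is unbounded before normalization. In particular, it must be made rigorous that the recursion tracks the ODE of the \emph{limit} game rather than that of the time-frozen game $\mathcal{G}_k$, which requires invoking a version of Borkar's theorem that simultaneously accommodates asynchronous updates and a deterministic vanishing perturbation, and quantifying how fast $\xi_k^i$ must decay relative to $\alpha_k$ to avoid disturbing the Lyapunov descent argument.
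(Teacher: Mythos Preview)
Your overall strategy---stochastic approximation with the non-stationary payoffs absorbed as a vanishing perturbation, then the ODE method and the Lyapunov functions of \citep{ref:Hofbauer05} together with strategic equivalence---is exactly the paper's approach. Two points in your write-up are not quite right, however.

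First, the assertion of a ``unique globally asymptotically stable rest point'' is false in the potential case. The entropy-regularized potential need not be concave, so the smooth best-response dynamics for a potential game can have several rest points (quantal response equilibria), and the Lyapunov function only forces trajectories into the set of rest points. This is why the theorem statement promises convergence to \emph{some} $\overline{q}$, and why the paper argues via connected internally chain-recurrent sets rather than global asymptotic stability: the Lyapunov function shows every internally chain-recurrent set of the $\pi$-flow consists of rest points, and the stochastic-approximation limit set, being connected and chain-recurrent, must land on one of them. Your argument as written would not close in the potential case.

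Second, the step ``this flow is conjugate to the corresponding smooth best-response flow'' skips a reduction that the paper carries out explicitly. The $q$-flow $\dot q^i(a^i)=u^i(\mathbb{P},a^i,\BR^j(q^j))-q^i(a^i)$ lives in $\mathbb{R}^{A^1}\times\mathbb{R}^{A^2}$, whereas the best-response flow \eqref{eq:BRdyn} lives in $\Delta^1\times\Delta^2$; they are not conjugate. The paper uses \citep[Lemmas~3.2 and~4.1]{ref:Leslie05} to show that the $q$-iterates are \emph{asymptotically belief-based}, i.e., the manifold $\mathcal{B}=\{(u^1(\mathbb{P},\cdot,\pi^2),u^2(\mathbb{P},\cdot,\pi^1)):(\pi^1,\pi^2)\in\Delta^1\times\Delta^2\}$ is a global attractor for the $q$-flow, and that any internally chain-recurrent set of the $q$-flow projects into one for the $\pi$-flow. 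Only after this reduction do the Hofbauer--Sandholm Lyapunov functions apply. Your affine reparametrization $q^j\mapsto h^jq^j+g^j$ handles strategic equivalence correctly, but it does not by itself bridge the $q$-space and $\pi$-space dynamics.
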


\begin{proof} 
The proof can follow from the convergence analyses in \citep{ref:Leslie05} for the repeated play of two-agent zero-sum or potential games. In the following, we highlight the differences. 

Firstly, by turning the problem around, we can consider that agents play the game $\mathcal{G}$ repeatedly while they receive their payoffs $u^i(\omega_k,a_k^i,a_k^j)$ with some additive error $e_k^i := u_k^i(\omega_k,a_k^i,a_k^j) - u^i(\omega_k,a_k^i,a_k^j)$. The error $e_k^i$ is asymptotically negligible as $u_k^i(\cdot)\rightarrow u^i(\cdot)$ point-wise. Furthermore, the iterates remain bounded due to the threshold in the step size. The boundedness of the local Q-function estimates ensures that the threshold in \eqref{eq:stepsize} becomes redundant eventually as $\alpha_k\rightarrow 0$ since we can bound the probability $\BR_k^i(a_k^i)$ in the denominator of $\overline{\alpha}_k^i$ from below by some term $\varepsilon>0$ uniformly for all $k$. 

Based on Assumption \ref{assume:stepsize}, as in \citep[Lemma 3.1]{ref:Leslie05}, we can show that the iterates $q_k^i$'s converge almost surely to a connected internally chain-recurrent set of the flow, defined by the following ordinary differential equation (o.d.e.):
\begin{align}
&\frac{d\pq^i}{dt} (a^i) = u^i(\mathbb{P},a^i,\BR^j(\pq^j)) - \pq^i(a^i)\quad\forall a^i,\label{eq:flowq}
\end{align}
for all $i=1,2$ and $j\neq i$, where $\pq^i:[0,\infty)\rightarrow \mathbb{R}^{A^i}$ is the continuous-time counterpart of $q_k^i\in\mathbb{R}^{A^i}$. The o.d.e. \eqref{eq:flowq} and  \citep[Lemma 3.2]{ref:Leslie05} yield that $\mathcal{B}:=\{(u^1(\cdot,\pi^2),u^2(\cdot,\pi^1)): (\pi^1,\pi^2)\in \Delta^1\times\Delta^2\}$ is the global attractor of the flow \eqref{eq:flowq} and $q_k^i$'s are \textit{asymptotically belief-based} as 
$(\pq^1,\pq^2)\rightarrow \mathcal{B}$ for any initial $\pq^i(0)$'s. Furthermore, \citep[Lemma 4.1]{ref:Leslie05} yields that any connected internally chain-recurrent set of \eqref{eq:flowq} is included in the connected internally chain-recurrent set of the flow defined by the continuous-time best response dynamics
\be\label{eq:BRdyn}
\frac{d\ppi^i}{dt} = \BR^i(u^i(\mathbb{P},\cdot,\ppi^j)) - \ppi^i
\ee
for all $i=1,2$ and $j\neq i$. 

Let $\underline{\mathcal{G}}=\langle \Omega,\mathbb{P},(A^i,\uu^i)\rangle_{i=1}^2$ be either a zero-sum or potential game that is strategically equivalent to $\mathcal{G}$, as described in \eqref{eq:SE}. More explicitly, for each $i=1,2$ and $j\neq i$, there exists $h^i>0$ and $g^{i}:A^j\rightarrow\mathbb{R}$ such that
\be\label{eq:nSE}
u^i(\mathbb{P},a^i,a^j) = h^i \cdot \uu^i(\mathbb{P}, a^i,a^j) + g^i(a^j)\quad\forall (a^i,a^j).
\ee 
This implies that $\BR^i(u^i(\mathbb{P},\cdot,\ppi^j)) = \BR^i(\uu^i(\mathbb{P},\cdot,\ppi^j))$.
Therefore, the flow \eqref{eq:BRdyn} is identical to the continuous-time best response dynamics for the zero-sum/potential game $\underline{\mathcal{G}}$. Hence, we can complete the proof based on the Lyapunov functions formulated in \citep{ref:Hofbauer05} for zero-sum and potential games.
\end{proof}

Next, consider the directed graph $G=(S,E)$ with $S$ and $E\subset S\times S$ denoting, resp., set of vertices and edges. The vertices correspond to states and there is an edge from $s$ to $s_+$, i.e., $(s,s_+)\in E$, if there exists $a\in A$ such that $p(s_+\mid s,a) > 0$. 

\begin{assumption}\label{assume:state}
The graph $G$ satisfies the following conditions:
\begin{itemize}
\item[(a)] The graph $G$ is strongly connected, i.e., each vertex is reachable from every other vertex.
\item[(b)] There exists a vertex $s$ with cycle length $\ell$ such that $\gcd(\ell,M)=1$.
\end{itemize}
\end{assumption}

The condition (b) in Assumption \ref{assume:state} holds, e.g., $(i)$ if a vertex has self-loop, or $(ii)$ if $G$ is fully connected, i.e., the underlying SG is irreducible in the sense that for each $(s,s_+)$, there exists some $a$ such that $p(s_+\mid s,a)>0$, or $(iii)$ if $M$ is prime. 

Given the graph $G=(S,E)$, consider the graph $\overline{G} = (\oS,\overline{E})$ with $\oS=S\times \oM$ and $\overline{E}\subset \oS\times \oS$ denoting, resp., set of vertices and edges such that vertices correspond to extended states and there is an edge from $\os$ to $\os_+$, i.e., $(\os,\os_+)\in\overline{E}$, if there exists $a\in A$ such that $p(s_+\mid s,a) > 0$ and $m_+ \equiv (m+1) \mod(M)$.

\begin{proposition}\label{prop:io}
If the graph $G=(S,E)$ satisfies Assumption \ref{assume:state}, then the graph $\overline{G}=(\oS,\overline{E})$ is strongly connected.
\end{proposition}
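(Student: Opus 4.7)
The plan is to show strong connectivity of $\overline{G}$ directly: for any two extended states $(s_1,m_1),(s_2,m_2)\in\oS$, I will construct a walk in $G$ from $s_1$ to $s_2$ whose length is congruent to $m_2-m_1\pmod M$. Such a walk lifts to a directed walk in $\overline{G}$ from $(s_1,m_1)$ to $(s_2,m_2)$, because the second coordinate advances by one (mod $M$) along each edge.

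First I would fix the distinguished vertex $s^*\in S$ whose cycle length $\ell$ satisfies $\gcd(\ell,M)=1$, guaranteed by condition (b) of Assumption \ref{assume:state}. Denote by $C$ a closed walk at $s^*$ of length exactly $\ell$. By condition (a) (strong connectivity of $G$), there exist walks $W_1$ from $s_1$ to $s^*$ and $W_2$ from $s^*$ to $s_2$; write $L_1,L_2$ for their respective lengths. Then, for any integer $k\ge 0$, the concatenation $W_1 \cdot C^k \cdot W_2$ is a valid walk in $G$ from $s_1$ to $s_2$ with total length $L_1+k\ell+L_2$.

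Next I would choose $k$ to match the residue. Since $\gcd(\ell,M)=1$, the map $k\mapsto k\ell\pmod M$ is a bijection on $\{0,1,\ldots,M-1\}$, so there is a unique $k\in\{0,1,\ldots,M-1\}$ with
\[
k\ell \;\equiv\; (m_2-m_1) - (L_1+L_2) \pmod M.
\]
For this $k$, the walk $W_1\cdot C^k\cdot W_2$ has length $L\equiv m_2-m_1\pmod M$. Lifting it to $\overline{G}$ starting at $(s_1,m_1)$, the second coordinate after $L$ steps is $m_1+L\bmod M = m_2$, and the first coordinate traces the walk in $G$ and ends at $s_2$. Thus $(s_2,m_2)$ is reachable from $(s_1,m_1)$ in $\overline{G}$, establishing strong connectivity.

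I do not foresee a serious obstacle; the only subtlety is recognizing that walks (not paths) suffice, so concatenation is unconstrained and $k=0$ is permissible when it already yields the correct residue. Conditions (a) and (b) of Assumption \ref{assume:state} are exactly what is needed to decouple the ``spatial'' reachability in $S$ from the ``temporal'' reachability modulo $M$, and the coprimality $\gcd(\ell,M)=1$ plays the standard role of making $\ell$ a generator of $\mathbb{Z}/M\mathbb{Z}$.
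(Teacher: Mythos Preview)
Your proof is correct and follows essentially the same strategy as the paper's: route every walk through the distinguished vertex $s^*$, then use repetitions of the length-$\ell$ cycle there to adjust the total walk length to the desired residue modulo $M$. If anything, your use of the bijectivity of $k\mapsto k\ell \pmod M$ (which is equivalent to the paper's appeal to B\'ezout's identity) makes the residue-matching step a bit more transparent.
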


\begin{proof}
Assumption \ref{assume:state}-(b) says that for some state $s$ we have $\gcd(\ell,M) = 1$. Then, B\'{e}zout's identity yields that there exist $x,y\in\mathbb{Z}$ such that $x\ell_{ss} + yM = 1$. Hence, for any $z\in \oM$, either $z\times x\ell \mod(M) \equiv z\mod(M)$ or $z\times x\ell \mod(M) \equiv M-z$. This implies that there is a path between the extended states $(s,m)$ and $(s,m')$ for any $(m,m')\in \oM\times \oM$. 

Consider arbitrary extended states $\os'=(s',m')$ and $\os''=(s'',m'')$. By Assumption \ref{assume:state}-(a), there exist paths from $s'$ to $s$ and $s$ to $s''$ in $G$ with lengths $\ell_{s's}$ and $\ell_{ss''}$, respectively. Then, as shown above, there exists a cycle for state $s$ in $G$ with length $\ell_{ss}$ such that $\ell_{s's} + \ell_{ss} + \ell_{ss''} \equiv (m''-m') \mod(M)$.
\end{proof}

 Proposition \ref{prop:io} yields that stage games associated with every extended state get played infinitely often, i.e., $c_k(\os)\rightarrow\infty$ as $k\rightarrow\infty$ almost surely for all $\os$, provided that the probabilities of any joint action gets played are uniformly bounded away from zero across $k=0,1,\ldots$. 
 
Algorithm \ref{tab:algo} may not converge to equilibrium in general as an uncoupled learning dynamic that does not incorporate the opponent's objective into the update rule \citep{ref:Hart03}. However, the following assumption identifies several widely-used classes of SGs for which we have provable convergence.

\begin{assumption}\label{assume:games}
The games $\langle A^i,r^i(s,\cdot)\rangle_{i=1}^2$ induced by rewards satisfy one of the following conditions:
\begin{itemize}
\item $\langle A^i,r^i(s,\cdot)\rangle_{i=1}^2$ is a zero-sum game for each $s$ and $\gamma^1=\gamma^2$ so that $\mathcal{S}$ is a zero-sum SG, or 
\item $\langle A^i,r^i(s,\cdot)\rangle_{i=1}^2$ is an identical-interest game for each $s$ and $\gamma^1=\gamma^2$ so that $\mathcal{S}$ is an identical-interest SG, or
\item $\langle A^i,r^i(s,\cdot)\rangle_{i=1}^2$ is either a zero-sum or potential game for each $s$ and there exists a single agent $i_s\in\{1,2\}$ such that 
\be
p(s_+\mid s,a_{i_s},(a_{j})_{j\neq i_s}) = p(s_+\mid s,a_{i_s},(\tilde{a}_{j})_{j\neq i_s})
\ee
for all $(s_+,s,a,(\tilde{a}_{j})_{j\neq i_s})$.
Agents can have different discount factors, including the time-averaged $(\gamma^1=1)$ vs discounted $(\gamma^2<1)$ cases.
\end{itemize}
\end{assumption}

The smoothed-fictitious-play variant studied in \citep{ref:Sayin22c} also have convergence guarantees for the SGs described in Assumption \ref{assume:games}. However, Algorithm \ref{tab:algo} differs from that algorithm by focusing on episodic strategies that reduce the coordination and memory burden on the agents. 

Now, we are ready to show the convergence of Algorithm \ref{tab:algo}.

\begin{theorem}\label{thm:SG}
Consider a two-agent SG $\mathcal{S}=\langle S,(A^i,r^i)_{i=1}^2,p\rangle$. Suppose that each agent follows Algorithm \ref{tab:algo} for some $M$, and Assumptions \ref{assume:stepsize}, \ref{assume:state}, and \ref{assume:games} hold. Then, we have $(q_k^i,v_k^i)_{i=1}^2 \rightarrow (q^i,v^i)_{i=1}^2$ as $k\rightarrow\infty$ almost surely for some $q^i:\oS\times A^i\rightarrow \mathbb{R}$ and $v^i:\oS\rightarrow\mathbb{R}$ satisfying
\begin{subequations}\label{eq:limit}
\begin{align}
&q^i(\os,a^i) = Q^i(\os,a^i,\opi^j(\os))\\
&v^i(\os) = \mathbb{I}_{\{m<M-1\}} \cdot q^i(\os,\opi^j(\os))
\end{align}
\end{subequations}
for all $(\os,a^i)$ and $i$, respectively, where the episodic strategy $\opi^j(\os) := \BR^j(q^j(\os,\cdot))$ and
\be\label{eq:QQ}
Q^i(\os,a) := r^i(s,a)+\gamma^i\sum_{s_+\in S} p(s_+\mid s,a) \cdot v^i(s_+,m+1).
\ee

Furthermore, we have $U^i(\opi^i,\opi^j;s) \geq U^i(\pi^i,\opi^j;s) - \epsilon^i$ for all $\pi^i\in\Pi^i$ and $(i,j)\in\{(1,2),(2,1)\}$. The approximation error $\epsilon^i\geq 0$ is given by
\be\label{eq:bound2}
\epsilon^i := \left\{\begin{array}{ll}
\delta^i\cdot \cfrac{1}{M} + \tau \log|A^i| & \mbox{if } \gamma^i=1\\
\delta^i\cdot \cfrac{(1-\gamma^i)(\gamma^i)^M}{1-(\gamma^i)^M} + \tau \log|A^i| &\mbox{if } \gamma^i\in [0,1)
\end{array}\right., 
\ee
where $\delta^i\geq 0$ is as described in \eqref{eq:deltai}.
\end{theorem}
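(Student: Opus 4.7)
The plan is a backward induction on the substage index $m\in\{M-1,\ldots,0\}$ to establish the fixed-point relations \eqref{eq:limit}, invoking Theorem \ref{thm:IndQ} at each level; then an application of Theorem \ref{lem:neari} extracts the error bound \eqref{eq:bound2}. Under Assumption \ref{assume:state}, Proposition \ref{prop:io} guarantees that every extended state $\os\in\oS$ is visited infinitely often almost surely, and because $\BR^i(\cdot)$ is uniformly bounded below by some $\varepsilon>0$ due to the entropy regularization, the $\min\{1,\cdot\}$ threshold in \eqref{eq:stepsize} becomes asymptotically inactive, so the local update \eqref{eq:qupdate} at each $(\os,a^i)$ ultimately matches the individual Q-learning recursion \eqref{eq:indQ}.

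\textbf{Base case} $m=M-1$. At the terminal substage the continuation term in \eqref{eq:QQ} vanishes by the boundary indicator in \eqref{eq:limit}, so the stage game at each $\os=(s,M-1)$ is the stationary reward game $\langle A^i,r^i(s,\cdot)\rangle_{i=1}^2$, which under Assumption \ref{assume:games} is zero-sum or potential (or strategically equivalent to one). The update \eqref{eq:qupdate} is then literally \eqref{eq:indQ} with the sampled next state playing Nature's role, and Theorem \ref{thm:IndQ} yields $q_k^i(\os,\cdot)\to q^i(\os,\cdot)$ almost surely satisfying the limit equation in \eqref{eq:limit}.

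\textbf{Inductive step.} Assuming the conclusion holds at all substages greater than $m$, one has $v_k^i(s_+,m+1)\to v^i(s_+,m+1)$ almost surely for every $s_+$, so the stage-game payoffs $Q_k^i(\os,\cdot)=r^i(s,\cdot)+\gamma^i\sum_{s_+}p(s_+\mid s,\cdot)v_k^i(s_+,m+1)$ form a pointwise-convergent sequence with limit \eqref{eq:QQ}. This matches the template \eqref{eq:secGames}, so it remains to verify that the limit game $\langle A^i,Q^i(\os,\cdot)\rangle_{i=1}^2$ is strategically equivalent to a zero-sum or potential game. I would check this case by case from Assumption \ref{assume:games}: for zero-sum rewards with $\gamma^1=\gamma^2$, the inductive fixed point forces $v^1(\cdot,m+1)+v^2(\cdot,m+1)=0$ (since $v^i$ equals the expected reward under the smoothed profile and the rewards cancel), hence $Q^1+Q^2=0$; for identical-interest rewards with $\gamma^1=\gamma^2$, $v^1(\cdot,m+1)=v^2(\cdot,m+1)$ and therefore $Q^1=Q^2$; for the switching-controllers case, the continuation term is a function of $a_{i_s}$ alone, which acts as a $g^j(a^{-j})$ correction for the non-controller (preserving strategic equivalence via \eqref{eq:SE}) and as a shift by a function of its own action for the controller, which a direct computation shows preserves the potential structure through a modified potential. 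Theorem \ref{thm:IndQ} then delivers the almost-sure convergence at $(s,m)$ and closes the induction.

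\textbf{Equilibrium bound.} At the limit, $\opi^i(\os)=\BR^i(q^i(\os,\cdot))$ is the entropy-regularized best response at each $\os$, with per-substage suboptimality bounded by $\tau\log|A^i|$ via the standard inequality $\max_\mu\{q(\mu)+\tau H(\mu)\}-\max_a q(a)\leq \tau\log|A^i|$. A backward computation in the finite-horizon MDP faced by agent $i$ against $\opi^{-i}$ compounds this across the episode into the bound $\wU^i(\opi^i,\opi^{-i};s)\geq\wU^i(\wpi^i,\opi^{-i};s)-\weps^i$ with $\weps^i=\tau\log|A^i|\cdot\sum_{k=0}^{M-1}(\gamma^i)^k$, which is exactly the premise \eqref{eq:nearfinite} of Theorem \ref{lem:neari}; substituting into \eqref{eq:bound} produces \eqref{eq:bound2} after simplification. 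The main obstacle is the strategic-equivalence verification for the switching-controllers subcase, especially reconciling the own-action continuation shift with the zero-sum/potential structure; the cross-substage coupling is otherwise handled cleanly by the induction, since at each inner step only the asymptotic behavior of $v_k^i(\cdot,m+1)$ matters and is exogenous to the updates at substage $m$ once the earlier hypotheses hold almost surely.
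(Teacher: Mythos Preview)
Your proposal is correct and follows essentially the same route as the paper: backward induction on the substage index $m$, invoking Theorem~\ref{thm:IndQ} at each level once the continuation values at $m+1$ have converged, followed by compounding the per-substage smoothed-best-response error $\tau\log|A^i|$ into $\weps^i=\tau\log|A^i|\sum_{k=0}^{M-1}(\gamma^i)^k$ and plugging this into Theorem~\ref{lem:neari} to obtain \eqref{eq:bound2}. The paper's proof is in fact terser than yours at the inductive step---it simply asserts that ``the backward induction implies the convergence'' without spelling out why the limiting stage game $\langle A^i,Q^i(\os,\cdot)\rangle_{i=1}^2$ remains strategically equivalent to a zero-sum or potential game under each branch of Assumption~\ref{assume:games}; your case-by-case verification (zero-sum propagates $v^1+v^2=0$, identical-interest propagates $v^1=v^2$, switching controller yields a continuation term depending only on $a_{i_s}$) is exactly the missing detail, and your flagging of the own-action shift for the controller as the delicate point is apt.
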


\begin{proof}
Consider the sequence of stage games
$\left(\mathcal{G}_{k}(\os)=\langle \Omega,\mathbb{P},(A^i,\widehat{Q}_k^i(\os,\cdot))_{i=1}^2\rangle\right)_{k=1}^{\infty}$,
specific to each extended state $\os$. The payoffs $\widehat{Q}_k^i(\os,\cdot):\Omega\times A^i\times A^j\rightarrow\mathbb{R}$ are defined by
\be
\widehat{Q}_k^i(\os,\omega,a) := r^i(s,a) + \gamma^i \cdot v_k^i(s'(\omega,s,a),m+1),
\ee
where the function $s':\Omega\times S\times A\rightarrow S$ is defined such that $s'(\omega,s,a)=s_+$ with probability $p(s_+\mid s,a)$ (due to the randomness on $\omega\sim \mathbb{P}$) so that we have $\widehat{Q}_k^i(\os,\mathbb{P},\cdot)\equiv Q_k^i(\os,\cdot)$ for $Q_k^i$, as described in \eqref{eq:Qk}. This implies that at stage $k$, the agents play the stage game $\mathcal{G}_k(\os_k)$, associated with the current extended state $\os_k\in\oS$.

On the other hand, the threshold on the step sizes used in Algorithm \ref{tab:algo} ensures that the iterates remain bounded. Given any $q^i:A^i\rightarrow \mathbb{R}$ bounded by $\max_{a^i}|q^i(a^i)| \leq \bar{q}^i$, the smoothed best response \eqref{eq:smoothed} satisfies
\be\label{eq:lowerbound}
\BR^i(q^i)(a^i) = \frac{\exp(q^i(a^i)/\tau)}{\sum_{\tilde{a}^i\in A^i}\exp(q(\tilde{a}^i/\tau))}  \geq \frac{1}{|A^i|}\cdot \exp(2\bar{q}/\tau)\quad\forall a^i.
\ee
Due to \eqref{eq:lowerbound} and Assumption \ref{assume:state}, Proposition \ref{prop:io} yields that every stage game associated with each extended state get played infinitely often. 

By \eqref{eq:lastQ}, the stage games for the last substage (i.e., $m=M-1$) of the episodes are stationary. As these stage games get played infinitely often and Assumption \ref{assume:stepsize} holds, Theorem \ref{thm:IndQ} yields that  $q_k^i(s,M-1,a^i)\rightarrow q^i(s,M-1,a^i)$ and correspondingly $v_k^i(s,M-1)\rightarrow v^i(s,M-1)$ for all $(s,a^i)$ and $i$ almost surely, where $q^i$ and $v^i$ are as described in \eqref{eq:limit}. This implies that 
\be
\mathcal{G}_k(s,M-2)\rightarrow \langle \Omega,\mathbb{P},(A^i,Q^i(s,M-2,\cdot))_{i=1}^2\rangle
\ee 
for each $s$, almost surely. The backward induction implies the convergence of $(q_k^i(\os,\cdot),v_k^i(\os))$ for all $\os\in \oS$. 

Next, we evaluate the performance of $\opi^i$ against $\opi^{j}$ for the finite-horizon utility $\wU^i(\cdot)$, as in \eqref{eq:nearfinite}, to quantify the approximation level of the episodic strategy profile $(\opi^i)_{i=1}^2$ based on Theorem \ref{lem:neari}. To this end, let $\opi_*^i$ be an $M$-episodic strategy that is the best response to $\opi^{j}$ for $\wU^i(\cdot)\equiv U_{0:M-1}^i(\cdot)$, where $U_{\underline{m}:\overline{m}}^i(\cdot)$ is as described in \eqref{eq:uu}. Then, the principle of optimality yields that
\begin{align}\label{eq:prencip}
&U_{m:M-1}^i(\opi_*^i,\opi^{j};s) = \max_{a^i\in A^i} \bigg\{r^i(s,a^i,\opi^{j}(s,m))\nonumber \\
&\hspace{.2in}
+ \gamma^i \sum_{s_+\in S} p(s_+\mid s,a^i,\opi^{j}(s,m)) \cdot U_{m+1:M-1}^i(\opi_*^i,\opi^{j};s_+)\bigg\}
\end{align} 
for each $s$ and $m\in\oM$. On the other hand, by \eqref{eq:smoothed}, the smoothed best response $\BR^i(q^i)$ achieves
\begin{equation}\label{eq:epsilon}
0\leq  \max_{a^i\in A^i} \{q^i(a^i)\} - q^i(\BR^i(q^i)) \leq \tau\log|A^i| =:\xi^i
\end{equation}
for any $q^i:A^i\rightarrow\mathbb{R}$. Define the sequence $\{e_m^i\}_{m=0}^{M-1}$ recursively such that $e_{M-1} = \xi^i$ and 
$e_m^i := \xi^i + \gamma^i \cdot e_{m+1}^i$ for $m<M-1$. 
Then, by \eqref{eq:epsilon}, at the last substage, i.e., $m=M-1$, we have 
\begin{equation}
U_{M-1:M-1}^i(\opi_*^i,\opi^{j};s) \leq v^i(s,M-1) + e_{M-1}^i\quad\forall s.
\end{equation} 
For some $m < M-1$, suppose that 
$U_{m+1:M-1}^i(\opi_*^i,\opi^{j};s) \leq v^i(s,m+1) + e_{m+1}^i$ for all $s$. Then, by \eqref{eq:QQ}, \eqref{eq:prencip}, and \eqref{eq:epsilon}, we have
\begin{align}
U_{m:M-1}(\opi_*^i,\opi^{j};s) &\leq \max_{a^i\in A^i} \bigg\{r^i(s,a^i,\opi^{j}(s,m))\nonumber \\
&\hspace{-.4in}
+ \gamma^i \sum_{s_+\in S} p(s_+\mid s,a^i,\opi^j(s,m)) \cdot (v^i(s_+,m+1) + e_{m+1}^i)\bigg\}\nonumber\\
&\leq v^i(s,m-1) + \xi^i + \gamma^i\cdot e_{m}^i,
\end{align} 
Therefore, we obtain
$U_{m:M-1}^i(\opi_*^i,\opi^{j};s) \leq v^i(s,m) + e_m^i$. By induction, we can conclude that $(\opi^i)_{i=1}^2$ satisfies \eqref{eq:nearfinite} for $\weps^i = e_0^i$ and $e_0^i=\xi^i\sum_{k=0}^{M-1}(\gamma^i)^k$ by its definition. Incorporating $\weps^i = \xi^i\sum_{k=0}^{M-1}(\gamma^i)^k$ into \eqref{eq:bound}, we can obtain \eqref{eq:bound2}. 
\end{proof}

\section{Conclusion}\label{sec:conclusion}

This paper presented finite-horizon approximation scheme and episodic equilibrium as a solution concept for SGs. The paper established approximation guarantees for this scheme, demonstrating its effectiveness in bridging the gap in the analyses of finite-horizon and infinite-horizon SGs. Furthermore, the paper introduced episodic individual Q-learning dynamics that provably converge to (near) episodic equilibrium in various important SG classes, encompassing both time-averaged and discounted utilities. Some of the future research directions include the study of 
episodic variants of other equilibrium concepts such as correlated equilibrium and Stackelberg equilibrium, and
episodic variants of other learning dynamics or equilibrium computation methods or equilibrium-seeking algorithms.

\section*{Acknowledgment}
This work was supported by The Scientific and Technological Research Council of T\"{u}rkiye (TUBITAK) BIDEB 2232-B International Fellowship for Early Stage Researchers under Grant Number 121C124.

\end{spacing}

\begin{spacing}{1}
\bibliographystyle{plainnat}
\bibliography{mybib}
\end{spacing}

\end{document}